

\documentclass[11pt]{article} 

\usepackage[utf8]{inputenc} 


\usepackage{geometry} 
\geometry{a4paper} 

\usepackage{authblk}

\usepackage[hidelinks]{hyperref}
\usepackage{graphicx} 
\usepackage{amsfonts,amsthm,amsmath,amssymb}
\newcommand{\R}{\mathbb{R}}
\newcommand{\C}{\mathbb{C}}

\newcommand{\I}{\mathbb{I}}
\newcommand{\N}{\mathbb{N}}
\newcommand{\ket}{\rangle}
\newcommand{\bra}{\langle}

\newcommand{\Ni}{\mathcal{N}}

\newcommand{\Bi}{\mathcal{B}}

\newcommand{\Hi}{\mathcal{H}}

\newcommand{\Mi}{\mathcal{M}}
\newcommand{\Ki}{\mathcal{K}}

\newcommand{\Di}{\mathcal{D}}
\newcommand{\Ei}{\mathcal{E}}
\newcommand{\Si}{\mathcal{S}}
\newcommand{\Fi}{\mathcal{F}}

\newtheorem{thm}{Theorem}

\newtheorem{Lemma}{Lemma}

\newtheorem*{Lemma*}{Lemma}

\newtheorem{Corollary}{Corollary}

\newtheorem*{thm*}{Theorem}

\theoremstyle{definition}
\newtheorem{Example}{Example}

\theoremstyle{definition}
\newtheorem{Definition}{Definition}

\theoremstyle{definition}

\theoremstyle{remark}
\newtheorem{Remark}{Remark}

\theoremstyle{remark}

\newtheorem*{remark*}{Remark}


\usepackage{booktabs} 
\usepackage{array} 
\usepackage{paralist} 
\usepackage{verbatim} 
\usepackage{subfig} 

\usepackage{fancyhdr} 
\pagestyle{fancy} 
\lhead{}\chead{}\rhead{}
\lfoot{}\cfoot{\thepage}\rfoot{}

\usepackage{sectsty}
\allsectionsfont{\sffamily\mdseries\upshape} 

\usepackage[nottoc,notlof,notlot]{tocbibind} 
\usepackage[titles,subfigure]{tocloft} 




\title{Operator Deformations in Quantum Measurement Theory}
\author{Andreas Andersson}
\affil{\small Email: Andreas.Andersson@mis.mpg.de}
\affil{\footnotesize Institut für Theoretische Physik, Universität Leipzig. Postfach 100 920 D-04009 Leipzig, Germany\\ Max Planck Institute for Mathematics in the Sciences. Inselstrasse 22, D-04103 Leipzig, Germany}
\affil{Mathematics Subject Classification 2010 Primary: 81Q02; Secondary: 81S02, 81T02}
\affil{Keywords: Quantum Measurements, Noncommutative Quantum Field Theory, Rieffel Deformation, Warped Convolution, POVM}

\date{2013} 

\begin{document}
\maketitle
\abstract
We describe rigorous quantum measurement theory in the Heisenberg picture by applying operator deformation techniques previously used in noncommutative quantum field theory. This enables the conventional observables (represented by unbounded operators) to play a role also in the more general setting. 

\section{Introduction}
Quantum field theory, in particular quantum electrodynamics, has yielded accurate predictions not possible from other models. On the other hand, it has had a less prominent role when it comes to understanding the basic quantum features per se. In such discussions usually much simpler models are applied and it is not always clear how quantum field theory would appear in the same context. Operational quantum measurement theory \cite{BLM} has during the last two decades provided a systematic generalization of von Neumann's original formulation of measurements. Just as von Neumann's model it is one of the conceptually most important components of quantum theory since it gives an operational description of the very interactions themselves. An important task is to understand how this is related to the other parts of quantum theory, in particular quantum field theory. If this can be achieved then there is hope of understanding a great deal of physical phenomena as emerging from this type of quantum interactions.    

However, the main objects in operational measurement theory are not the observables which appear in quantum field theory. In particular, the conventional observables are represented by unbounded operators. To deal with the same formalism using conventional observables thus requires some care to ensure that the mathematical expressions make sense. The result of this paper is an identification of the measurement disturbance as a deformation which can be made mathematically rigorous and relates directly to generators of symmetry transformations and the conventional observables.  

To understand why deformations related to spacetime symmetries and observables have been studied a lot, although not as deformations due to quantum interactions, one has to appreciate the recent interest in applying algebra deformations to physics. There has been a growing interest in the idea of a ``noncommutative spacetime" in the sense of algebra deformation, i.e. the idea that the spacetime coordinates $x_0,x_1,x_2,x_3$ do not commute \cite{ADK}. While the main argument for this (related to quantum gravity) is not the motivation for this paper, investigations of noncommutativities in addition to the standard Heisenberg relation have certainly yielded a great deal of physical insight. Most directly, descriptions of many phenomena in condensed matter physics can be obtained from the case of free particles by, instead of adding interactions by hand to the Hamiltonian or Lagrangian, introducing some nonstandard commutation relations between momenta and/or coordinates (see e.g. \cite{Ho}). Again, this is interesting since the noncommutativity of quantum mechanics is due to a disturbance induced by measurement. What is then the relation between these other noncommutativities, which reproduce forces, and quantum measurements? Since the latter simply describes interactions with small systems of matter, there should exist some relation. Indeed, in a paper which will appear shortly we show using the results of this paper that the above noncommutative models can be understood from the theory of quantum measurements as initiated by von Neumann.

A measurement interaction between two quantum systems is typically modeled by a unitary operator $W=e^{-i X\otimes Y}$ acting on the composite Hilbert space $\Hi\otimes\Ki$ of the two systems. Here $X$ and $Y$ are conventional observables, i.e. unbounded selfadjoint operators in $\Hi$ and $\Ki$ respectively. If $A$ is an observable in $\Hi$ then $W^{-1}(A\otimes\I)W$ is the corresponding observable after the measurement. In Section \ref{techsection} we show that these elements are obtained as a deformation:
$$
W^{-1}(A\otimes\I)W=\int_{\R^2}\alpha_{(-y,x)}(A\otimes\I)\, dE^{X\otimes Y}(x,y).
$$
Here $E^{X\otimes Y}$ is the joint spectral measure of $X\otimes Y$ and $(x,y)\to \alpha_{(x,y)}$ is an action of $\R^2$ on suitably smooth observables, more precisely an action generated by $X$ and $Y$ (here $X$ and $Y$ can be replaced by any even integer $2N$ of generators to get an action of $\R^{2N}$). The integral above is a special case of a \textbf{warped convolution}, a mathematically sound way to obtaining noncommutative effects in physical models. It was introduced quite recently in the context of algebraic quantum field theory \cite{BS},\cite{BLS}. In addition to making the above identification, we use the warped convolution to make rigorous the formula
$$
W^{-1}(A\otimes\I)W=\int_{\R}e^{iyX}Ae^{-iyX}\otimes \, dE^Y(y),
$$
in the case when $A$ is unbounded but satisfies some requirements (Theorem \ref{maintheorem}). For this we need to extend the notion of warped convolution to the case when $A$ is not bounded. The result obtained is sufficiently general to cover the case when $A$ is a momentum or coordinate operator, or a polynomial of these (equivalently, a polynomial in annihilation and creation operators).

That the warped convolution is unitarily implemented as here makes it very peculiar. Comparing the two expression for $W^{-1}(A\otimes\I)W$ given above we see that there is a redundancy in the parameters $(y,-x)$ of $\alpha$. Nevertheless, the second expression resembles again a warped convolution but of $A$ instead of $A\otimes\I$. This will be useful when discussing physical implications in the accompanying paper.
\begin{Remark}
Warped convolution in a tensor product situation has been used also for constructing chiral quantum field theory in $1+1$ dimensions \cite{Tan} where a similar formula appeared. In that case only bounded operators were deformed and it will be covered as a special case of Theorem \ref{maintheorem} below.
\end{Remark}
After establishing this result we will in Section \ref{considerations} briefly discuss it in the language of operational quantum measurement theory. We give an explicit formula for the measurement ``instrument" and thus a recipe for how to construct such an object. These notions set the stage for an investigation of some of the basic features of quantum theory that will be left to a separate paper. We refer to that paper as ``the accompanying paper" since they are closely related. 

It  should be noted that the formalism of this paper is very general. Almost all treatments of quantum measurements assume that the algebra of observables equals  $\Bi(\Hi)$ where $\Hi$ is taken to describe the state vectors of the system of interest (as exceptions we would like to mention the work by Ojima \cite{Oj1},\cite{Oj2}, and also the very nice study \cite{Pod}; see also Section \ref{Von Neumann Measurements})\footnote{Even more commonly used are the matrix algebras $M_n(\C)$ but for most purposes we have in mind this will not be useful.}. However, suppose that the observable algebra of some quantum system is acting on the Hilbert space $\Hi_\omega$ defined by a state $\omega$ of the system, and suppose we want to consider the observables located in subregions of the system; these then usually form a proper subalgebra of $\Bi(\Hi_\omega)$ which is not isomorphic to $\Bi(\Hi)$ for any Hilbert space $\Hi$ (see e.g. \cite{Emc},\cite{Fre},\cite[sec.V.6]{Haa}). Such localized observables are interesting since interactions are localized in space and time. Also when taking the thermodynamical limit of an equilibrium system the observable algebra is usually not isomorphic to $\Bi(\Hi)$. It is therefore not without benefit that we can give a description quantum measurements which applies in a more general context. 

\section{Motivation}\label{preliminaries}
\subsection{Some Notation}
We denote by $\Hi$ or $\Ki$ the infinite-dimensional separable Hilbert space and by $\Bi(\Hi)$ the algebra of bounded operators on $\Hi$. If $X$ is a selfadjoint operator then $\operatorname{Spec}(X)$ denotes its spectrum, $E^X$ denotes its spectral measure and $E^X(\Delta)$ the corresponding projection when $\Delta\subset\R$ is a Borel set.  For two sets of operators $X_1,\dots X_N$ and $Y_1,\dots,Y_N$ in different Hilbert spaces $\Hi$ and $\Ki$ we use the notation 
$$\textbf{X}\otimes \textbf{Y}=X_\mu\otimes Y^\mu:=\sum^N_{\mu=1}X_\mu\otimes Y^\mu$$
and 
$$
e^{-i\textbf{x}\cdot\textbf{X}}:=\text{exp}\Big[-i\sum^N_{\mu=1}x_\mu X^\mu\Big],\quad\quad \textbf{x}\in\R^N.
$$
We use $\Mi_*$ to denote the set of normal linear functionals $\omega:\Mi\to\C$ of a von Neumann algebra $\Mi$. We denote by $\omega_\xi$ the vector state $\omega_\xi(A):=\bra\xi|A\xi\ket$ on $\Bi(\Hi)$ induced by an element $\xi\in\Hi$ with $\|\xi\|=1$. 

We say that a selfadjoint operator $X$ in $\Hi$ is \textbf{affiliated} to $\Mi\subset\Bi(\Hi)$ if $e^{itX}$ belongs to $\Mi$ for all $t\in\R$.

\subsection{The Measurement Process}\label{Von Neumann Measurements}
The foundations of quantum measurements were laid by von Neumann when he introduced his measurement model \cite{vN}. Generalizations in various directions have been considered, e.g. to operators with continuous spectra by Ozawa \cite{Oza}. The archetype measurement scheme concerns a quantum system with an observable $Q$ represented as an operator in a Hilbert space $\Hi$. For measuring $Q$, the interaction Hamiltonian is assumed to be of the form
$$
H_\kappa=\kappa Q\otimes \tilde{P},
$$
where $\kappa$ is some coupling constant giving the proper units and $\tilde{P}$ is an operator in another Hilbert space $\Ki$. For later comparison we note that we can spectrally decompose $Q$ and $P$ using spectral measures $E^Q$ and $E^P$ as
$$
Q= \int_{\R}x \cdot dE^Q(x), \quad\quad P=\int_{\R}p \cdot dE^P(p).
$$
The model of von Neumann assumes that the interaction takes place instantaneously (so the intrinsic evolutions of the system and apparatus do not affect the outcome). Therefore, the total (unitary) time evolution on $\Hi\otimes\Ki$  is
$$
W_\kappa:=e^{-i\kappa Q\otimes \tilde{P}}.
$$
The motivation for this choice of interaction comes from the following observation. Assume that $\tilde{P}$ has purely absolutely continuous spectra (e.g. the momentum operator) and that $\tilde{P}$ is conjugate to another operator $\tilde{Q}$, meaning that $[\tilde{Q},\tilde{P}]=i\I$. Let $\xi=\psi\otimes \phi$ be a vector in $\Bi(\Hi)\otimes\Bi(\Ki)$ defining the initial state. Suppose that $\bra\phi|\tilde{Q}\phi\ket$ is known, and for simplicity let it be $0$. Then the value of $\I\otimes \tilde{Q}$ in the final state is 
\begin{align*}
\omega_\xi(W^{-1}_\kappa(\I\otimes  \tilde{Q})W_\kappa)&=\bra e^{-i\kappa X\otimes \tilde{P}}(\psi\otimes \phi)|(\I\otimes  \tilde{Q}) e^{-i\kappa X\otimes \tilde{P}} (\psi\otimes \phi)\ket
\\&= \bra \int_{\R}d E^Q(s)\psi\otimes e^{-i\kappa s\tilde{P}}\phi|\int_{\R}d E^Q(t)\psi\otimes  \tilde{Q}e^{-i\kappa t \tilde{P}}\phi\ket 
\\&= \bra \psi|Q\psi\ket \int_{\R}\phi(q-\kappa s)q\phi(q-\kappa s)\, dq
\\&= \bra \psi|Q\psi\ket \int_{\R}\phi(q)(q+\kappa s)\phi(q)\, dq
\\&=\kappa\bra \psi|Q\psi\ket.
\end{align*}
That is, it is precisely with the unitary $W_\kappa$ that the measurement of $Q$ can be achieved. 

During the last two decades the theory of quantum measurements has been developed more systematically in the language of operational quantum theory \cite{BLM}. In this formalism ``observables" are positive operator-valued measures (POVMs).

\begin{Definition}
Let $\Omega$ be a nonempty set and let $\Fi$ be a $\sigma$-algebra of subsets of $\Omega$. A countably additive mapping $E:\Fi\to\Bi(\Hi)$ is called a \textbf{POVM} or \textbf{semispectral measure} if $0\leq E(\Delta)\leq\I$ for all $\Delta\in\Fi$ (i.e. each $E(\Delta)$ is an \textbf{effect}) and $E(\Omega)=\I$.
\end{Definition}
\begin{Definition}
A POVM $E:\Fi\to\Bi(\Hi)$ is called a projection-valued measure (\textbf{PVM}) or \textbf{spectral measure} if in addition $E(\Delta)^2=E(\Delta)$ for all $\Delta\in\Fi$ or (equivalently) $E(\Delta)E(\Delta')=0$ whenever $\Delta\cap\Delta'=\emptyset$. 

A POVM is also referred to as an \textbf{unsharp} observable while a PVMs are \textbf{sharp} observables.
\end{Definition}
Thus if we regard the spectral measure $E^X$ of a selfadjoint operator $X$ as the observable then the POVMs are ``generalized observables". There are very good reasons to argue that POVMs are needed in addition to the PVMs in order to use quantum theory in full power \cite{BLM}, many of which will be very explicit in the accompanying paper. Nevertheless, an important aspect of the tools we develop below (deformations using selfadjoint operators as generators) is that the conventional observables (e.g. multiplication and differentiation operators) can be more directly involved also in this more general formulation of measurements. 

The idea of defining a measurement process in a uniform way as below goes back to Ozawa \cite{Oza} and it has been widely used since then \cite{BLM}. We give a definition which is more general than what we could find in the literature. 

Let $\Mi\subseteq\Bi(\Hi)$ be a von Neumann algebra.
\begin{Definition}\label{meas}
A \textbf{measurement} of an observable $E:\Fi\to\Mi$ is a quintuple $(\Ki,Z,\omega_\Ki,W,f)$ where $\Ki\cong\Hi$ is the separable Hilbert space, $Z$ is a selfadjoint operator on $\Ki$, $\omega_\Ki$ is a normal state on $\Ki$, $W$ is a unitary operator on $\Hi\otimes\Ki$ (the \textbf{time evolution}) and $f:\text{Spec}(Z)\to\Omega$ is a measurable function called the \textbf{pointer function}. It is required that
\begin{equation}\label{repeat}
\omega[E(\Delta)]=(\omega\otimes\omega_\Ki)[W^{-1}(\I\otimes E^Z(f^{-1}(\Delta))W)]
\end{equation}
for all $\omega\in\Mi_*$ and $\Delta\in\Fi$.  

\end{Definition}
The meaning of Definition \ref{meas} is that the elements of $\Mi$ evolve under the measurement according to $A\to W^{-1}(A\otimes\I)W$, and similarly for those of $\Ni$, and that $\I\otimes Z$ takes the same values in the final total state as $E$ does in the initial state. With such a measurement scheme $(\Ki,Z,\omega_\Ki,W,f)$ the condition (\ref{repeat}) is usually called the \textbf{probability reproducibility condition}. When $E$ is a PVM this condition cannot hold if $E$ has continuous spectrum.  Nevertheless, the above interaction can still attempt to reproduce the values of $Q$ by means of $\tilde{Q}$ with a certain degree of inaccuracy. Replacing $Q$ by a discrete version allows perfect precision. The measured observable (see also Definition \ref{measobsdef}) is either unsharp or discrete (or both) \cite[p.119]{BLM},\cite{Oza}.

Above we took what seemed to be the most straightforward generalization of a measurement scheme for arbitrary von Neumann algebras, while restricting the time evolution to always be unitarily implementable (the approach most similar to this one can be found in \cite{Pod} where also more general evolutions are mentioned). Another way of doing this was elegantly formulated by Ojima \cite{Oj1},\cite{Oj2} where he made the brilliant identification of the above unitary $W$ with the so-called ``fundamental unitary" known from the mathematical literature of Hopf von Neumann algebras (we use the same symbol $W$ to appreciate this). 

\subsection{Rieffel Deformations and Warped Convolutions}
A few years ago, as a tool for constructing algebraic quantum field theories, e.g. for incorporating noncommutative effects of spacetime, Buchholz, Lechner and Summers \cite{BLS},\cite{BS} introduced a type of operator deformation which they called "warped convolution". The idea is as follows. Consider a set of commuting selfadjoint operators $P=(P_\mu)=(P_0,P_1,P_2,P_3)$; we could for example think of the momentum operators. These generate a 4-parameter unitary representation $x\to U(x)$ of spacetime translations in the Hilbert space defined by the physical state. There is thus an action 
$$
\alpha_x(A)=U(x)^{-1}AU(x):=e^{ixP}Ae^{-ixP}
$$
of $\R^4$ on $\Bi(\Hi)$. For a bounded operator $A$ which is smooth with respect to this action, the \textbf{warped convolution} of $A$ can be defined and equals
\begin{equation}\label{warped}
A_\Theta:=\int_{\R^4}\alpha_{\Theta x}(A)\, dE(x),
\end{equation}
where $dE(x)$ is the joint spectral measure of the $P_\mu$'s and $\Theta$ is a $4\times 4$ skew-symmetric matrix. This deformation turns out to be related to the deformed products developed by Rieffel \cite{BLS},\cite{LW}. 

The formula \eqref{warped} has interesting applications to physics, for example when the generators are the momenta $P_\mu$. Other commuting generators can be important as well. Application for \eqref{warped} has been found in excess by Albert Much \cite{Mu}. It turns out that deformations with the coordinate operators $X^\mu$ conjugate to the $P_\mu$'s actually reproduce minimal coupling to a gauge field, at least in the nonrelativistic setting (see also the accompanying paper). This is intuitive since the generators of Galilean boosts are basically the coordinate operators, and 
$$``\text{boost}\to\text{acceleration}\to\text{force}\to\text{gauge potential}."$$   
The above deformation \eqref{warped} somehow provides a path from symmetries to forces using only the commutation relations of the symmetry group. Can we also understand why this is true? 

Of concern is also the unitarity of the transformation $A\to A_\Theta$. More precisely, it is not true in general that there is a unitary operator $U\in\Bi(\Hi)$ such that an operator $A$ on $\Hi$ can be mapped to $A_\Theta$ by $A\to U^{-1}AU$. But it is known that there are situations when introducing noncommutativity by means of replacing $A$ by $A_\Theta$ can account for the difference between a system without and in the presence of an external force field. The noncommutativity should, as in the case of the Heisenberg relation, come from the interaction between two or more quantum systems. Thus, in addition to the observable algebra of the system, the other player in this interaction (which is not seen in the above description) must be included in order to obtain this unitarity. Again this will be discussed in an accompanying paper; in this paper we take care of the required underlying theory.  

\section{Deformations from Quantum Measurements}\label{techsection}
To summarize our motivation, we would like to have a relation between quantum measurements and a field description of the involved forces. In the description of quantum measurements there appears a natural tensor product $\Hi\otimes\Ki$ between the system $\Hi$ and the apparatus $\Ki$ (the latter can also be viewed as an environment; it is just any other quantum system interacting with that described by the Hilbert space $\Hi$). We consider an interaction of the form
$$
W_\kappa=e^{-i\kappa \mathbf{X}\otimes \mathbf{Y}}
$$
for some $\kappa\in\R$, where $X_1,\dots,X_N$ are affiliated to the algebra of observables $\Mi$ on $\Hi$ and $Y_1,\dots,Y_N$ are affiliated to the algebra of observables $\Ni$ on $\Ki$. As always we assume that $\Mi\subset\Bi(\Hi)$ and $\Ni\subset\Bi(\Ki)$ are von Neumann algebras. After the interaction the elements $A\otimes\I$ affiliated to $\Mi\otimes\Ni$ have in the Heisenberg picture evolved into $W_\kappa^{-1}(A\otimes\I)W_\kappa$. Consider the action $\alpha_{\mathbf{x},\mathbf{y}}(A):=U(\mathbf{x},\mathbf{y})^{-1}AU(\mathbf{x},\mathbf{y})$ of $\R^{2N}$ on $\Mi\otimes\Ni$ where
\begin{equation}\label{small action}
U(\mathbf{x},\mathbf{y}):=e^{-i\mathbf{x}\cdot \mathbf{X}}\otimes e^{i \mathbf{y}\cdot \mathbf{Y}},  \quad\quad (\mathbf{x},\mathbf{y})\in\R^{2N}.
\end{equation}
Let $\Theta$ be a skew-symmetric $2N\times 2N$ matrix. Similar to the convolution \eqref{warped} we write for operators $B$ on $\Hi\otimes\Ki$
$$
B_{\kappa\Theta}:=\int_{\text{Spec}(\mathbf{X}\otimes \mathbf{Y})}\alpha_{\kappa\Theta(\mathbf{x},\mathbf{y})}(B)\, dE^{\mathbf{X}\otimes \mathbf{Y}}(\mathbf{x},\mathbf{y}).
$$
We shall show that
\begin{equation}\label{main result}
W_\kappa^{-1}(A\otimes\I)W_\kappa=(A\otimes\I)_{\kappa\Theta}
\end{equation}
when $A$ is a reasonably well-behaved operator in $\Hi$ and $\Theta:=\bigl(\begin{smallmatrix}
0&\I\\ -\I&0
\end{smallmatrix} \bigr)$. To have the result for unbounded $A$ is necessary since all physical observables that we are going to deform are unbounded. The result will be general enough to cover all cases we are interested in. In words Equation \eqref{main result} says that:
\begin{flushleft}Post-measurement observables $e^{i\kappa \mathbf{X}\otimes \mathbf{Y}}(A\otimes\I)e^{-i\kappa \mathbf{X}\otimes \mathbf{Y}}$ in $\Hi$ are obtained from the operators $A\otimes\I$ by deformation using warped convolutions with the action $\alpha_{\mathbf{x},\mathbf{y}}=\operatorname{Ad}U(\mathbf{x},\mathbf{y})$ given in \eqref{small action} where $X$ is the operator to be measured on the quantum system. \end{flushleft}
Having this background with quantum interactions in mind, we will in this section forget about observables or von Neumann algebras and just discuss operators on Hilbert spaces in general.  

\subsection{Technical Results}
\begin{Lemma}\label{the lemma}
Let $X$ and $Y$ be arbitrary selfadjoint operators on Hilbert spaces $\Hi$ and $\Ki$ respectively. Then
\begin{equation}\label{inproof}
e^{-iX\otimes Y}=\int_{\operatorname{Spec}Y}e^{-iyX}\otimes dE^Y(y)
\end{equation}
in the weak sense. 
\end{Lemma}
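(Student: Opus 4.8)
The plan is to prove \eqref{inproof} by reducing everything to the spectral calculus of $Y$ and interpreting the right-hand side as a spectral integral with operator-valued integrand $y \mapsto e^{-iyX}$. First I would fix unit vectors $\psi\otimes\phi\in\Hi\otimes\Ki$ (these span a dense subspace, so the weak identity on such vectors suffices by sesquilinearity and density) and compute the matrix element of the left-hand side. Using the joint spectral calculus one can write $e^{-iX\otimes Y}$ via the spectral measure of $Y$: since $X\otimes Y$ acts as $X$ tensored against the ``multiplication by $y$'' picture of $Y$, for nice vectors one expects
$$
\bra \psi\otimes\phi\,|\,e^{-iX\otimes Y}(\psi\otimes\phi)\ket
=\int_{\operatorname{Spec}Y}\bra\psi\,|\,e^{-iyX}\psi\ket\,\ud\bra\phi\,|\,E^Y(y)\phi\ket,
$$
which is exactly the matrix element of $\int_{\operatorname{Spec}Y}e^{-iyX}\otimes\ud E^Y(y)$ against $\psi\otimes\phi$. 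So the whole content is justifying this interchange of the exponential with the spectral integral over $Y$.

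The cleanest route to that identity is to first establish it on the total set of vectors $\psi\otimes\phi$ with $\phi$ in a spectral subspace $E^Y(\Delta)\Ki$ for a bounded Borel set $\Delta$, where $Y$ restricts to a bounded selfadjoint operator. On such a subspace $X\otimes Y$ is (the closure of) a genuine tensor product of a selfadjoint by a bounded selfadjoint operator, and one can approximate $Y|_\Delta$ in operator norm by simple functions $Y_n=\sum_j y_j E^Y(\Delta_j^{(n)})$ with $\{\Delta_j^{(n)}\}$ a finite Borel partition of $\Delta$. For each step function, $e^{-iX\otimes Y_n}=\bigoplus_j e^{-iy_j X}\otimes E^Y(\Delta_j^{(n)})$ literally by the functional calculus on the orthogonal decomposition $\Ki=\bigoplus_j E^Y(\Delta_j^{(n)})\Ki$, which is precisely $\int e^{-iyX}\otimes\ud E^{Y_n}(y)$. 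Then I would let $n\to\infty$: on the left, $X\otimes Y_n\to X\otimes Y$ in the strong resolvent sense (indeed in a strong sense on the bounded spectral subspace), hence $e^{-iX\otimes Y_n}\to e^{-iX\otimes Y}$ strongly; on the right, $\|e^{-iyX}\|\le 1$ uniformly so dominated convergence for the scalar spectral measure $\ud\bra\phi|E^Y(\cdot)\phi\ket$ gives convergence of the matrix elements. Finally, since $\bigcup_{\Delta}E^Y(\Delta)\Ki$ over bounded Borel $\Delta$ is dense in $\Ki$ and both sides are bounded operators (norm $\le 1$), the weak identity extends to all of $\Hi\otimes\Ki$.

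The main obstacle, and the only genuinely delicate point, is the unboundedness of $X$: the integrand $e^{-iyX}$ is a bounded-operator-valued function of $y$, which is fine, but one must be careful that $X\otimes Y$ is only essentially selfadjoint on the algebraic tensor product and that $e^{-iX\otimes Y}$ really is the unitary generated by its closure. Restricting to the bounded spectral subspaces of $Y$ sidesteps this, because there $X\otimes Y|_\Delta$ is essentially selfadjoint on $\mathcal{D}(X)\otimes E^Y(\Delta)\Ki$ with a transparent functional calculus, and the passage $n\to\infty$ only uses norm convergence $Y_n\to Y|_\Delta$, which controls the generators uniformly. I would not belabor these convergence estimates in the writeup — they are routine once the step-function case and the spectral-subspace reduction are in place — and I would phrase the final statement as an identity of matrix elements, consistent with the ``weak sense'' in the statement. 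One could alternatively invoke the bounded functional calculus for the pair $(X,Y)$ directly, but the approximation argument above is self-contained and makes the role of each hypothesis visible.
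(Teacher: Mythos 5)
Your argument is correct, but it takes a genuinely different route from the paper's. The paper's proof is a two-line Fubini argument: it invokes the fact that the joint spectral measure factorizes, $dE^{X\otimes Y}(x,y)=dE^X(x)\otimes dE^Y(y)$ (citing Fox), writes the matrix element of $e^{-iX\otimes Y}$ on a product vector as a double scalar integral of $\bra\psi_1|dE^X(x)\psi_2\ket\,e^{-ixy}\,\bra\phi_1|dE^Y(y)\phi_2\ket$, and integrates out the $x$-variable first to recover $\bra\psi_1|e^{-iyX}\psi_2\ket$; density and the uniform bound $\|e^{-iX\otimes Y}\|=1$ then finish it. You instead reduce to the bounded spectral subspaces $E^Y(\Delta)\Ki$, verify the identity exactly for step-function approximants $Y_n$ of $Y|_\Delta$ via the orthogonal decomposition $\Ki=\bigoplus_j E^Y(\Delta_j^{(n)})\Ki$, and pass to the limit using strong resolvent convergence of $X\otimes Y_n$ (on the common core $\Di_X\odot E^Y(\Delta)\Ki$) on one side and dominated convergence for the scalar measure $d\bra\phi|E^Y(\cdot)\phi\ket$ on the other. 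What the paper's route buys is brevity, at the cost of importing the product formula for the joint spectral measure as a black box; what your route buys is self-containedness and an explicit handling of the domain/essential-selfadjointness issues that the paper glosses over --- though you do still implicitly need that $\Hi\otimes E^Y(\Delta)\Ki$ reduces $X\otimes Y$ with restriction $X\otimes Y|_\Delta$, which is itself a (mild) consequence of the same joint functional calculus. Both proofs are sound; yours is longer but makes the role of each hypothesis visible, as you say.
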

\begin{proof}
Note first that the spectral measure of $X\otimes Y$ is $dE^{X\otimes Y}(x,y)=dE^X(x)\otimes dE^Y(y)$ \cite{Fox}. For any two vectors $\psi_1\otimes\phi_1$ and $\psi_2\otimes\phi_2$ in $\Hi\otimes\Ki$ we have
\begin{align*}
\bra\psi_1\otimes\phi_1|e^{-iX\otimes Y}\psi_2\otimes\phi_2\ket&=\bra\psi_1\otimes\phi_1|\Big(\int_{\R^2}dE^X(x)e^{-ixy}\otimes dE^Y(y)\Big)\psi_2\otimes\phi_2\ket
\\&:=\int_{\R^2}\bra\psi_1|dE^X(x)\psi_2\ket e^{-ixy}\bra\phi_1|dE^Y(y)\phi_2\ket.
\end{align*}
These are just ordinary (i.e. scalar-valued) integrals and the integrand is integrable in each variable separately. By Fubini's theorem we have
\begin{align*}
&\int_{\R^2}\bra\psi_1|dE^X(x)\psi_2\ket e^{-ixy}\bra\phi_1|dE^Y(y)\phi_2\ket.
\\&=\int_{\R}\bra\psi_1|e^{-iyX}\psi_2\ket \bra\phi_1|dE^Y(y)\phi_2\ket.
\\&=:\bra\psi_1\otimes\phi_1|\Big(\int_{\text{Spec}Y}e^{-iyX}\otimes dE^Y(y)\Big)\psi_2\otimes\phi_2\ket.
\end{align*}
Thus \eqref{inproof} holds in the sense of matrix elements for product vectors and therefore on all of space by continuity.  
\end{proof}

In the following theorem we denote by $\Di_A$ the domain of an operator $A$ and for a selfadjoint operator $X$ we denote by $\Di_X^\infty$ its smooth domain, i.e. the intersection of the domains of all powers of $X$. 

\begin{thm}\label{maintheorem}
Let $X_1,\dots X_N ,Y_1,\dots,Y_N$ be arbitrary commuting selfadjoint operators in Hilbert spaces $\Hi$ and $\Ki$ respectively. Consider the representation of $\R^{2N}$ in $\Hi\otimes\Ki$ given by
$$
U(\mathbf{x},\mathbf{y}):=e^{-i\mathbf{x}\cdot\mathbf{X}}\otimes e^{i\mathbf{y}\cdot\mathbf{Y}},  \quad\quad (\mathbf{x},\mathbf{y})\in\R^{2N},
$$
and suppose $A$ is an operator on $\Hi$ for which there exists a dense subspace $\Di\subset\Hi$ such that
$$
A\Di\subset\Di, \quad\quad X\Di\subset\Di, \quad\quad e^{-i\mathbf{x}\cdot\mathbf{X}}\Di\subset\Di, \quad\quad \forall \mathbf{x}\in\R^N
$$
and for all $\psi\in\Di$ there exists a constant $c_\psi$ such that
\begin{equation}\label{polbound}
\|Ae^{-i\mathbf{y}\cdot\mathbf{X}}\psi\|\leq c_\psi(1+\|\mathbf{y}\|^2)^{m/2}, \quad\quad \forall\mathbf{y}\in\R^N
\end{equation}
with $m\in\N$ independent of $\psi$. Then the following expressions are well-defined (in the weak sense) on the algebraic tensor product $\Di\odot\Di_Y^\infty$ and we have the equalities
\begin{align*}
&e^{i\mathbf{X}\otimes \mathbf{Y}}(A\otimes\I)e^{-i\mathbf{X}\otimes \mathbf{Y}}
\\&=\int_{\mathrm{Spec}\mathbf{Y}}e^{i\mathbf{y}\cdot\mathbf{X}}Ae^{-i\mathbf{y}\cdot\mathbf{X}} \otimes dE^\mathbf{Y}(\mathbf{y})
\\&=\int_{\R^{2N}} dE^{\mathbf{X}\otimes \mathbf{Y}}(\mathbf{x},\mathbf{y})U(\Theta(\mathbf{x},\mathbf{y}))^{-1}(A\otimes\I)U(\Theta(\mathbf{x},\mathbf{y}))\equiv(A\otimes\I)_\Theta
\end{align*}
where $\Theta:=\bigl(\begin{smallmatrix}
0&\I\\ -\I&0
\end{smallmatrix} \bigr)$.
\end{thm}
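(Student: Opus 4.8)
The plan is to read both displayed equalities as identities between sesquilinear forms on the algebraic tensor product $\Di\odot\Di_{\mathbf{Y}}^\infty$, and by sesquilinearity to test them only on product vectors $\psi_1\otimes\phi_1,\ \psi_2\otimes\phi_2$ with $\psi_i\in\Di$, $\phi_i\in\Di_{\mathbf{Y}}^\infty$; everything then reduces to scalar integrals that are reorganised by Fubini. The engine is the obvious multivariable refinement of Lemma \ref{the lemma}. Since the $X_\mu$ commute among themselves and the $Y_\mu$ commute among themselves, the $N$ operators $X_\mu\otimes Y^\mu$ form a commuting family of selfadjoint operators, so their sum $\mathbf{X}\otimes\mathbf{Y}$ is (essentially) selfadjoint and is the image of the function $(\mathbf{x},\mathbf{y})\mapsto\mathbf{x}\cdot\mathbf{y}$ under the joint spectral measure $dE^{\mathbf{X}}(\mathbf{x})\otimes dE^{\mathbf{Y}}(\mathbf{y})$ of the family $\{X_\mu\otimes\I,\ \I\otimes Y_\mu\}$. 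Repeating the Fubini computation in the proof of Lemma \ref{the lemma} verbatim, with $\mathbf{x}\cdot\mathbf{y}$ in place of $xy$, then gives, in the weak sense, $e^{-i\mathbf{X}\otimes\mathbf{Y}}=\int_{\R^{N}}e^{-i\mathbf{y}\cdot\mathbf{X}}\otimes dE^{\mathbf{Y}}(\mathbf{y})$ together with $dE^{\mathbf{X}\otimes\mathbf{Y}}(\mathbf{x},\mathbf{y})=dE^{\mathbf{X}}(\mathbf{x})\otimes dE^{\mathbf{Y}}(\mathbf{y})$.

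For the first equality I would write the left-hand form, using $e^{i\mathbf{X}\otimes\mathbf{Y}}=(e^{-i\mathbf{X}\otimes\mathbf{Y}})^{*}$, as $\langle e^{-i\mathbf{X}\otimes\mathbf{Y}}(\psi_1\otimes\phi_1)\,|\,(A\otimes\I)\,e^{-i\mathbf{X}\otimes\mathbf{Y}}(\psi_2\otimes\phi_2)\rangle$, substitute the representation of $e^{-i\mathbf{X}\otimes\mathbf{Y}}$ in both slots, and use that the resulting two-parameter spectral measure $\langle dE^{\mathbf{Y}}(\mathbf{y}_1)\phi_1\,|\,dE^{\mathbf{Y}}(\mathbf{y}_2)\phi_2\rangle$ is the pushforward of $d\langle\phi_1\,|\,E^{\mathbf{Y}}(\mathbf{y})\phi_2\rangle$ under the diagonal embedding $\mathbf{y}\mapsto(\mathbf{y},\mathbf{y})$; this collapses the putative double integral to
\[
\int_{\operatorname{Spec}\mathbf{Y}}\langle e^{-i\mathbf{y}\cdot\mathbf{X}}\psi_1\,|\,A\,e^{-i\mathbf{y}\cdot\mathbf{X}}\psi_2\rangle\,d\langle\phi_1|E^{\mathbf{Y}}(\mathbf{y})\phi_2\rangle=\int_{\operatorname{Spec}\mathbf{Y}}\langle\psi_1\,|\,e^{i\mathbf{y}\cdot\mathbf{X}}Ae^{-i\mathbf{y}\cdot\mathbf{X}}\psi_2\rangle\,d\langle\phi_1|E^{\mathbf{Y}}(\mathbf{y})\phi_2\rangle,
\]
which is by definition the matrix element of $\int_{\operatorname{Spec}\mathbf{Y}}e^{i\mathbf{y}\cdot\mathbf{X}}Ae^{-i\mathbf{y}\cdot\mathbf{X}}\otimes dE^{\mathbf{Y}}(\mathbf{y})$ between the two product vectors.

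For the second equality I would simply evaluate the warped-convolution integrand. With $\Theta=\bigl(\begin{smallmatrix}0&\I\\-\I&0\end{smallmatrix}\bigr)$ one has $\Theta(\mathbf{x},\mathbf{y})=(\mathbf{y},-\mathbf{x})$, hence $U(\Theta(\mathbf{x},\mathbf{y}))=e^{-i\mathbf{y}\cdot\mathbf{X}}\otimes e^{-i\mathbf{x}\cdot\mathbf{Y}}$ and therefore $U(\Theta(\mathbf{x},\mathbf{y}))^{-1}(A\otimes\I)U(\Theta(\mathbf{x},\mathbf{y}))=e^{i\mathbf{y}\cdot\mathbf{X}}Ae^{-i\mathbf{y}\cdot\mathbf{X}}\otimes\I$, which is independent of $\mathbf{x}$ because the second tensor factor of $A\otimes\I$ is the identity. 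Inserting this together with $dE^{\mathbf{X}\otimes\mathbf{Y}}=dE^{\mathbf{X}}\otimes dE^{\mathbf{Y}}$ into the definition of $(A\otimes\I)_\Theta$ and integrating out $\mathbf{x}$ against the finite scalar measure $d\langle\psi_1|E^{\mathbf{X}}(\mathbf{x})\psi_2\rangle$, using $E^{\mathbf{X}}(\R^{N})=\I$, leaves exactly $\int_{\operatorname{Spec}\mathbf{Y}}e^{i\mathbf{y}\cdot\mathbf{X}}Ae^{-i\mathbf{y}\cdot\mathbf{X}}\otimes dE^{\mathbf{Y}}(\mathbf{y})$, i.e. the middle expression; so all three forms coincide.

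The main obstacle is entirely one of justification, since each step above interchanges the order of scalar integrations and therefore demands absolute convergence, and this is precisely where the two hypotheses enter. By \eqref{polbound} the integrand of the collapsed integral is dominated by $\|\psi_1\|\,c_{\psi_2}(1+\|\mathbf{y}\|^2)^{m/2}$, and $\int_{\R^{N}}(1+\|\mathbf{y}\|^2)^{m/2}\,|d\langle\phi_1|E^{\mathbf{Y}}(\mathbf{y})\phi_2\rangle|$ is finite exactly because $\phi_1,\phi_2$ lie in the smooth domain $\Di_{\mathbf{Y}}^\infty$ — Cauchy--Schwarz applied to the polarisation of the projection-valued measure bounds it by $\|(1+\mathbf{Y}^2)^{m/4}\phi_1\|\,\|(1+\mathbf{Y}^2)^{m/4}\phi_2\|$ — so the collapsed integral converges absolutely and all the Fubini steps are legitimate; the same estimate makes the extra $\mathbf{x}$-integration in the warped-convolution step trivial. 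A secondary technical point is attaching a precise meaning to $(A\otimes\I)\,e^{-i\mathbf{X}\otimes\mathbf{Y}}(\psi_2\otimes\phi_2)$, which I would handle by first replacing $\phi_2$ by $E^{\mathbf{Y}}(\{\|\mathbf{y}\|\le R\})\phi_2$, so that $e^{-i\mathbf{X}\otimes\mathbf{Y}}$ becomes a genuine $\Di$-valued integral over a bounded region on which $A\otimes\I$ acts termwise (this uses $A\Di\subset\Di$, $X_\mu\Di\subset\Di$ and $e^{-i\mathbf{x}\cdot\mathbf{X}}\Di\subset\Di$), establishing the identity there and then letting $R\to\infty$, the passage to the limit being controlled by \eqref{polbound} and dominated convergence. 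No closedness of $A$ is required, because the object asserted to be well-defined is the sesquilinear form itself.
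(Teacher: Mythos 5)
Your proof is correct, but it takes a genuinely different route from the paper's. The paper reduces to $N=1$, uses Lemma \ref{the lemma} to tie the warped convolution to the middle expression, and then rewrites $\int e^{iyX}Ae^{-iyX}\otimes dE^Y(y)$ in its oscillatory-integral form $\tfrac{1}{2\pi}\int\int e^{iyX}Ae^{-iyX}\otimes e^{ikY}e^{-iyk}\,dy\,dk$; the hypotheses on $A$ are used only to check that the integrand $F(y)\otimes e^{ikY}$ is smooth and polynomially bounded in matrix elements, after which convergence is delegated wholesale to the Rieffel--type oscillatory-integral machinery of \cite{LW}. You never leave the spectral-measure picture: you collapse the double integral using the diagonal support of $\langle dE^{\mathbf{Y}}(\mathbf{y}_1)\phi_1|dE^{\mathbf{Y}}(\mathbf{y}_2)\phi_2\rangle$ (a property of projection-valued measures), secure absolute convergence of the resulting single integral from \eqref{polbound} together with the moment bound $\int(1+\|\mathbf{y}\|^2)^{m/2}\,|d\langle\phi_1|E^{\mathbf{Y}}(\mathbf{y})\phi_2\rangle|\le\|(\I+\mathbf{Y}^2)^{m/4}\phi_1\|\,\|(\I+\mathbf{Y}^2)^{m/4}\phi_2\|$ available because $\phi_i\in\Di_Y^\infty$, and evaluate the third expression by observing that its integrand is $\mathbf{x}$-independent so the $\mathbf{x}$-marginal integrates out against $E^{\mathbf{X}}(\R^N)=\I$. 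Your route buys self-containedness (no oscillatory integrals, no appeal to \cite{LW}), a uniform treatment of general $N$, and a more careful account --- via the cutoff $E^{\mathbf{Y}}(\{\|\mathbf{y}\|\le R\})$ and dominated convergence --- of why the first expression is even well defined on $\Di\odot\Di_Y^\infty$, a point the paper passes over quickly. The paper's route buys the explicit identification of the middle expression as a warped convolution in the technical sense of \cite{BLS},\cite{LW}, which matters for the surrounding narrative; if you intend the symbol $(A\otimes\I)_\Theta$ to denote that oscillatory integral rather than a literal spectral integral, you should add one line noting that the two coincide here precisely because the integrand depends only on $\mathbf{y}$ and the joint spectral measure factorizes.
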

\begin{Remark}
Trivially, the same is true if $A\otimes\I$ is replaced by $A\otimes B$ whenever $B\in\Bi(\Ki)$ strongly commutes with $Y$. 
\end{Remark}

\begin{proof}
We consider the case of $N=1$ since it is similar for all $N$. If we can show that the warped convolution is defined for $A$ satisfying the requirements of the theorem, then
\begin{equation*}
\int_{\R\times\R} dE^{X\otimes Y}(x,y)U(-y,x)(A\otimes\I)U(-y,x)^{-1}=\int_{\R}e^{iyX}Ae^{-iyX} \otimes dE^Y(y)
\end{equation*}
holds by definition of $U(x,y)$ and Lemma \ref{the lemma}. Thus it remains to show that the above integral representation holds.
Using $e^{ixY}= \int_{\R}e^{ixk}dE^Y(k)$ we can write
$$
\int_{\R}e^{iyX}Ae^{-iyX} \otimes dE^Y(y)=\frac{1}{2\pi}\int_{\R}\int_{\R}e^{iyX}Ae^{-iyX} \otimes e^{ikY}e^{-iyk}\, dy\, dk,
$$ 
which is the oscillatory-integral form of the warped convolution in the present case. We show that the integrand of the above double integral is smooth in $y$ and $k$ in a suitable locally convex topology and that all derivatives are polynomially bounded. Then we can just rely on \cite{LW} where these properties of the integrand of the oscillatory warped convolution integral were shown to be sufficient for the validity of the same integral.  

The smoothness of $k\to e^{ikY}$ on $D^\infty_Y$ is clear so we need only consider the left tensor factor of the integrand. Define the operator-valued function
$$
F(y):=e^{iyX}Ae^{-iyX},\quad\quad y\in\R.
$$
The matrix elements $\bra \psi|e^{iyX}Ae^{-iyX}\psi\ket$ are well defined for $\psi\in\Di$ by assumption. The corresponding matrix elements of the derivatives of $F$ consist of finite linear combinations of terms of the form
$$
|\bra \psi|X^ne^{iyX}Ae^{-iyX}X^s\psi\ket|\leq \|X^n\psi\|\cdot \|Ae^{-iyX}X^s\psi\|
$$
for $n,s\in\N$. By assumption we have that $X^s\psi\in\Di$, and also that $\|X^n\psi\|$ is finite. Therefore, the condition \eqref{polbound} gives
$$
|\bra \psi|\partial^n_yF(y)\psi\ket|\leq c_{n,\psi}(1+|y|^2)^{m/2}
$$
for some finite number $ c_{n,\psi}$, where $\partial^n_y$ denotes the $n$th order derivative with respect to $y$. Therefore, all derivatives of $F$ are polynomially bounded and smooth with respect to matrix elements. 

The warped convolution with integrand $F(y)\otimes e^{ikY}$ is thus defined on the algebraic tensor product $\Di\odot\Di_Y^\infty$
of finite linear combinations of product vectors $\psi\otimes\phi$. Hence the equalities given in the theorem hold on this dense subspace of $\Hi\otimes\Ki$, and the equality with $e^{iX\otimes Y}(A\otimes\I)e^{-iX\otimes Y}$ shows that this defines an operator with domain $\Di\odot\Di_Y^\infty$. 
\end{proof}

\begin{Example}[Bounded operators]
The case when both the generators $X,Y$ and the operator $A$ to be deformed are bounded is obtained by taking $\Di=\Hi,\Di_Y^\infty=\Ki$ and $m=0$. However, in this case there is a much simpler alternative proof, which we now briefly sketch. The first step is to make a power series expansion of $e^{-iX\otimes Y}$
\begin{align*}
e^{-iX\otimes Y}&=\sum_{n=0}^\infty \frac{(-i)^n}{n!}X^n\otimes Y^n
\\&=\sum_{n=0}^\infty \frac{(-i)^n}{n!}X^n\otimes \int_{\R}y^n\, dE^Y(y)
\\&=\int_{\R}\sum_{n=0}^\infty \frac{(-i)^n}{n!}y^nX^n\otimes  dE^Y(y)
\\&=\int_{\R}e^{-iyX}\otimes  dE^Y(y),
\end{align*}
using the spectral theorem for $Y$ and Lebesgue's dominated convergence theorem in the sense of matrix elements to interchange the limits. Then for $A\in\Bi(\Hi)$ this can be directly applied to give the result. However, bounded operators never appear in our applications.
\end{Example}

\begin{Example}[Main Application]
To show that the above result is sufficiently general for physical applications, let $X$ be the position operator on $L^2$ with domain $\Di_X=\{f\in L^2(\R)|\int_\R|xf(x)|^2dx<\infty$ and let $A=P$ be the momentum operator with dense domain defined as the set of $f\in L^2(\R)$ with weak derivative in $L^2(\R)$. Then viewing Schwartz space $\Si(\R)$ as a subspace of $L^2(\R)$ we can take $\Di=\Si(\R)$ because this space is invariant under $X, P$ and $e^{-ixX}$ for all $x\in\R$ and
$$
\|Pe^{-ixX}\psi\|=\|e^{ixX}Pe^{-ixX}\psi\|=\|(P+ix)\psi\|\leq c_\psi(1+|x|^2)^{1/2}
$$
for all $n\in\N$ and $\psi\in\Si(\R)$. Thus $m=1$ suffices in this case. More generally, the powers of $P^m$ also satisfies the properties of $A$ in the theorem since
$$
\|P^me^{-ixX}\psi\|\leq c_\psi(1+|x|^2)^{m/2}, \quad\quad \forall \psi\in\Si(\R).
$$
\end{Example}

\subsection{Remarks}

\begin{Remark}[Formalism without tensor product]\label{without tensor product}
We now emphasize which properties of the tensor product were used in proving Theorem \ref{maintheorem}. At the same time we obtain a slight modification of the construction which may turn out to be useful when discussing local operators represented on the same Hilbert space in future papers. 

Let $X$ and $Y$ be commuting selfadjoint operators acting on $\Hi$ such that the spectral measure of $XY$ is the product $E^XE^Y$. Let $A$ be an operator on $\Hi$ satisfying the properties of Theorem \ref{maintheorem} and assume $[Y,A]=0$. Then it follows as in the tensor product version that
$$
e^{iXY}Ae^{-iXY}=\int_{\R}e^{iyX}Ae^{-iyX} \, dE^Y(y).
$$
The warped convolution which is used to prove that the above formula holds for unbounded $A$ is defined by the action
$$
\alpha_{(x,y)}(A):=e^{-ixX}e^{iyY}Ae^{-iyY}e^{ixX},
$$
which equals $e^{-ixX}Ae^{ixX}$ when $[Y,A]=0$.

This being said, unless in a particular representation of the operator algebras it is useful to keep the tensor product since it makes clear when the joint spectral measure is really a product. 
\end{Remark} 

\begin{Remark}[Relation to Twist Deformation]
With the standard symplectic structure as deformation matrix the warped convolution becomes a representation of the well-known Moyal $\star$-product (or rather the ``twisted convolution" related to the $\star$-product via some intertwining Fourier transforms \cite{BoH}). For suitable functions $f,g:\R^{2n}\to\R$ this product is given by a well-defined Rieffel-integral formula \cite{Po}: 
$$
(f\star g)(x)=\int_{\R^{2n}}\int_{\R^{2n}}f(x+\Theta z)g(x+y)e^{2\pi i z\cdot y}\, dz \, dy
$$
where $\Theta=\bigl(\begin{smallmatrix}0&\I\\ -\I&0\end{smallmatrix} \bigr)$ is the standard symplectic structure on $\R^n$
(indeed, this formula inspired Rieffel's generalized setting). Moreover, if we introduce a small parameter $\hbar$ and replace $\Theta$ by $\hbar\Theta$ then expanding the function $t\to e^{2\pi it}$ in a Taylor series gives \cite{EGB} (denoting the resulting product by $\star_\hbar$)
$$
(f\star_\hbar g)(x) \sim \sum_{|\alpha|=0}^\infty\frac{(i\hbar)^{|\alpha|}}{\alpha!}(-1)^{\bra\alpha\ket}\partial^{\alpha}f\partial^{\alpha}g, \quad\quad \text{when } \hbar\to 0
$$
for multi-indices $\alpha=(\alpha_1,\dots,\alpha_{2n})\in\N^{2n}$ and $|\alpha|:=\alpha_1+\cdots\alpha_{2n}, \bra\alpha\ket:=\alpha_n+\cdots+\alpha_{2n}$ (with e.g. $1,\dots,n$ the positions and $n,\dots,2n$ the momenta). This is the reason why the Moyal product is sometimes written as \cite{ADK}
$$
(f\times_{\hbar\Theta} g)(x)=m_\otimes \Big[e^{i\hbar\Theta_{jk}\partial^j\otimes \partial^k/2}(f\otimes g)(x)\Big]
$$
with $m_\otimes(f\otimes g)(x):=f(x)g(x)$. The tensor product structure can therefore be used to obtain an asymptotic expansion of the Moyal product by applying a unitary transformation. Sometimes the partial derivatives are identified with the momentum operators in the Schrödinger representation. The operator $e^{i\hbar\Theta_{jk}\partial^j\otimes \partial^k/2}$ is an example of a \textbf{twist}. Varying the skew-symmetric matrix $\Theta$ yields more general twists $e^{i\hbar\Theta_{jk}\partial^j\otimes \partial^k/2}$ which have been used in the context of Lie algebra deformations and in models of noncommutative spacetime \cite{FW},\cite[Sec.8]{ADK}. The tensor product appearing in the twist does not have the same interpretation as the measurement-motivated tensor product we discuss in this paper because it is a tensor product of the same algebra. That there is a connection between the twist and warped convolutions is well-known since they both can reproduce Moyal-Weyl space, but  how twist-looking operators could appear in the warped convolution context was unknown. Here we see that they are particular cases of our unitary transformation, although the multiplication map $m_\otimes$ makes them not really equivalent. Also, the twist deformations exist only as formal power series, in contrast to the warped convolution. Nevertheless we can give a clearer physical and mathematical meaning to at least some of these deformations. In fact, as we shall see in another paper, the measurement coupling reproducing minimal coupling to an electromagnetic gauge potential turns out to be the usual twist operator in three dimensions but with coordinate operators instead of Schrödinger-represented momentum operators. When the algebra of the spacetime coordinate is twisted (from commutative to noncommutative) using the latter unitary, the resulting spacetime is referred to as \textbf{Moyal-Weyl space}. The corresponding \emph{momentum} Moyal-Weyl space can also be obtained by doing warped convolutions; this is done in \cite{Mu}. 
\end{Remark}

\section{Some Notions from Operational Measurements}\label{considerations}
Let $A\in\Mi$ be an observable. So far we have investigated the element $W^{-1}(A\otimes\I)W$ corresponding to $A$ after an interaction $W=\text{exp}(-iX\otimes Y)$ with some other system $\Ki$. But $A\otimes\I$ is an operator on the composite system $\Hi\otimes\Ki$. The evolution of $A$ is obtained after choosing an initial state $\omega_\Ki$ on $\Bi(\Ki)$ and evaluating $W^{-1}(A\otimes\I)W$ in $\I\otimes\omega_\Ki$. This last step is similar to the partial trace operation on states (but it is not its dual).
 
If $\omega_\Ki\in\Bi(\Hi)_*$ is the initial state on $\Ki$ then the time evolution of an element $A\in\Mi$ is given by 
$$A\to\int_{\text{Spec}Y}e^{iyX}Ae^{-iyX} \omega_\Ki[dE^Y(y)].$$
Now it may be that the outcome of the measurement is recorded by measuring the pointer observable $E^Z$ conjugate to $E^Y$. In that case the evolution of $A$ can be made more precise; it is zoomed in using the outcome of the measurement. For this we use the notion of ``instruments" \cite{DL}. Namely, for all Borel subsets $\Delta$ of $\R$ we have the map
$$
\Ei^*_{\Delta}:\Mi\to\Mi, \quad\quad A\to (\I\otimes\omega_\Ki)[W^{-1}(A\otimes E^Z(f^{-1}(\Delta)))W],
$$
which defines the \textbf{dual} $\Ei^*:\Delta\to\Ei^*_\Delta$ of the instrument of the interaction described by $X\otimes Y$ (here $f:\text{Spec}(Z)\to\text{Spec}(X)$ is a function relating the spectra as in Definition \ref{meas}). 
\begin{Remark}
That each $\Ei_\Delta^*$ takes values in the algebra $\Mi$ is guaranteed since $X$ was assumed to be affiliated to $\Mi$.
\end{Remark}
The \textbf{instrument} $\Delta\to(\Ei_\Delta:\Mi_*\to\Mi_*)$ is then defined via
$$
(\Ei_{\Delta}(\rho))(A)=\rho(\Ei^*_{\Delta}(A)), \quad\quad A\in\Mi.
$$
Thus, the map $\Ei_{\Delta}$ on states corresponds to the Schrödinger picture while the map $\Ei_{\Delta}^* $ on observables corresponds to the Heisenberg picture.
For our purposes however, the most important use of instrument is that it can show us how the statistics of the measured observable are generally not the same as one might have guessed. 
\begin{Definition}\label{measobsdef}
Let $\Ei:\Mi_*\to\Mi_*$ be an instrument. The \textbf{measured observable} $E^{\Ei}$  associated to $\Ei$ is defined by
$$
E^{\Ei}(\Delta):=\Ei_\Delta^*(\I).
$$
\end{Definition}
\begin{Remark}
For a given instrument $\Ei$ the measured observable $E^\Ei$ is unique. On the other hand, there are many instruments which define the same observable $E^\Ei$.  
\end{Remark}
Since we have obtained an explicit formula for the deformed observables, the observable associated to $\Ei$ can be calculated explicitly.
\begin{Corollary}
Let $(\Ki,Z,\omega_\Ki,e^{-i\kappa X\otimes Y},f)$ be a measurement of an observable $X$ as in Definition \ref{meas} and assume that $[Y,Z]=i\I$. Then the measured observable is given by  
\begin{equation*}\label{measobs}
E^{\Ei}_\kappa(\Delta)=\int_{\mathrm{Spec}X}dE^X(x)\omega_\Ki[dE^Z(f^{-1}(\Delta-\kappa x))]. 
\end{equation*}
\end{Corollary}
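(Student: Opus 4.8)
The plan is to unwind the definitions and reduce the statement to a translation-covariance property of the spectral measure $E^Z$. By Definition \ref{measobsdef} and the formula for the dual instrument $\Ei^*$ given just before it, with $W=e^{-i\kappa X\otimes Y}$ we have
$$
E^{\Ei}_\kappa(\Delta)=\Ei^*_\Delta(\I)=(\I\otimes\omega_\Ki)\big[W^{-1}(\I\otimes E^Z(f^{-1}(\Delta)))W\big].
$$
So it suffices to identify the operator $W^{-1}(\I\otimes E^Z(f^{-1}(\Delta)))W$ on $\Hi\otimes\Ki$ and then apply the state $\I\otimes\omega_\Ki$ underneath the integral sign; the latter step is harmless because $\omega_\Ki$ is normal and only bounded operators occur.

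First I would diagonalize $W$ over $\operatorname{Spec}X$. Running the Fubini argument of Lemma \ref{the lemma} with the two tensor factors interchanged gives, in the weak sense on product vectors,
$$
W=\int_{\operatorname{Spec}X}dE^X(x)\otimes e^{-i\kappa xY},\qquad W^{-1}=\int_{\operatorname{Spec}X}dE^X(x)\otimes e^{i\kappa xY}.
$$
Conjugating $\I\otimes E^Z(f^{-1}(\Delta))$ by these and using orthogonality of the spectral projections $E^X$ (so that the resulting double integral over $\operatorname{Spec}X\times\operatorname{Spec}X$ collapses to the diagonal), one obtains
$$
W^{-1}(\I\otimes E^Z(f^{-1}(\Delta)))W=\int_{\operatorname{Spec}X}dE^X(x)\otimes e^{i\kappa xY}E^Z(f^{-1}(\Delta))e^{-i\kappa xY}.
$$
This manipulation involves only bounded operators, so it is of the same (and easier) type as the ones already justified in the proofs of Lemma \ref{the lemma} and Theorem \ref{maintheorem}; alternatively one may quote Theorem \ref{maintheorem} itself with $A=\I$ and the roles of $X,Y$ (and of the two Hilbert spaces) exchanged, the polynomial bound \eqref{polbound} being trivial here since $E^Z(f^{-1}(\Delta))$ is bounded.

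It remains to evaluate the conjugated projection $e^{i\kappa xY}E^Z(f^{-1}(\Delta))e^{-i\kappa xY}$. Because $[Y,Z]=i\I$, the conjugate pair $(Y,Z)$ satisfies the Weyl commutation relations, so $e^{isY}$ translates $Z$, i.e.\ $e^{isY}Ze^{-isY}=Z-s\I$ (the sign being fixed by this convention together with $W=e^{-i\kappa X\otimes Y}$), and therefore $e^{isY}E^Z(\Gamma)e^{-isY}=E^Z(\Gamma+s)$ for every Borel set $\Gamma$. Taking $s=\kappa x$, $\Gamma=f^{-1}(\Delta)$ and using that $f$ intertwines this translation (so $f^{-1}(\Delta)+\kappa x=f^{-1}(\Delta-\kappa x)$) turns the integrand into $dE^X(x)\otimes E^Z(f^{-1}(\Delta-\kappa x))$; applying $\I\otimes\omega_\Ki$ then yields exactly the claimed formula for $E^{\Ei}_\kappa(\Delta)$. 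The one point that needs genuine care is this last step: deducing the covariance $e^{isY}E^Z(\Gamma)e^{-isY}=E^Z(\Gamma-s)$ from the bare commutation relation $[Y,Z]=i\I$ — this is the translation action built into the Weyl relations and should be invoked in that form (or obtained from the Stone--von Neumann uniqueness for the pair $(Y,Z)$) — and, together with it, keeping the signs straight through the definition of $W$ and of the pointer function $f$. Everything else is routine bookkeeping with weak-sense spectral integrals.
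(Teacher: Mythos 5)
Your proposal is correct and follows essentially the same route as the paper: expand the conjugation by $W$ as a spectral integral over $\operatorname{Spec}X$ so that the pointer projection is conjugated fiberwise by $e^{i\kappa xY}$, then use the conjugacy of $Y$ and $Z$ to translate $E^Z$. You merely spell out the intermediate steps (the Fubini/diagonal collapse and the Weyl-relation covariance, including the sign conventions) that the paper's two-line computation leaves implicit.
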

\begin{proof}
Let $\Ei$ be the corresponding instrument. The result follows since we assumed $E^Z$ to be conjugate to $E^Y$:
\begin{align*}
\Ei_\Delta^*(\I)&=\int_{\mathrm{Spec}X}dE^X(x) \omega_\Ni[e^{i\kappa x Y}E^Z(f^{-1}(\Delta))e^{-i\kappa xY}] 
\\&=\int_{\mathrm{Spec}X}dE^X(x)\omega_\Ni[dE^Z(f^{-1}(\Delta-\kappa x))]. 
\end{align*}
\end{proof}
Therefore, unless $\omega_\Ki[dE^Z(f^{-1}(\Delta-\kappa x))]=\chi_\Delta$ for all $x\in\text{Spec}(X)$, the measured observable will not be equal to the one ``intended" to be measured, i.e. the one appearing in the interaction Hamiltonian. When we have $[Y,Z]=i\I$, this can only happen if $\kappa=0$. On the other hand, if $[Y,Z]=0$ then $\Ei_\Delta^*(\I)=\omega_\Ki[E^Z(f^{-1}(\Delta))]\I$ for all $\Delta$ so that only a multiple of the identity can be measured (which is usually far from $E^X$!). This manifests the trade-off between accuracy and disturbance. 

\section{Acknowledgements}
The author thanks his college Albert Much and supervisors Gandalf Lechner and Rainer Verch for discussions and proofreading.

\section{References}
\bibitem[ADK]{ADK} Aschieri P, Dimitrijevi\'c M, Kulish P, Lizzi F, Wess J. Noncommutative spacetimes: symmetries in noncommutative
geometry and field theory. Lect. Notes Phys. 774  (2009).

\bibitem[BoH]{BoH} Bowes D, Hannabuss KC. Weyl quantization and star products. Journal of Geometry and Physics 22, 319-348 (1997).

\bibitem[BLS]{BLS} Buchholz D, Lechner G, Summers S. Warped convolutions, Rieffel deformations and the construction of quantum field theories. Commun. Math. Phys. 304, 95-123 (2011). 

\bibitem[BS]{BS} Buchholz D, Summers S. Warped convolutions: a novel tool in the construction of quantum field theories. In: Quantum field theory and beyond. World Scientific, Singapore, pp. 107–121. arXiv:0806.0349v1 (2008). 

\bibitem[BLM]{BLM} Busch P, Lahti PJ,  Mittelstaedt P. The quantum theory of measurement – Second revised edition.
Springer-Verlag, Berlin (1996).

\bibitem[DL]{DL} Davies EB, Lewis JT. An operational approach to quantum probability. Comm. Math. Phys. 18, 239 (1970).

\bibitem[Emc]{Emc} Emch GC. Algebraic methods in statistical mechanics and QFT. Wiley. NewYork (1972).

\bibitem[EGB]{EGB} Estrada R, Garcia-Bond\'{i}a, V\'{a}rilly JC. On asymptotic expansions of twisted products. J. Math. Phys. 30, 2789 (1989).

\bibitem[FW]{FW} Fiore G, Wess J. On “full” twisted Poincar\'{e} symmetry and QFT on Moyal-Weyl spaces. arXiv:hep-th/0701078v3 (2007).

\bibitem[Fox]{Fox} Fox DW. Spectral measures and separation of variables. Journal of Research of the National Bureau of Standards. Section B: Mathematical Sciences Vol. 80B, No.3, 347-351 (1976).

\bibitem[Fre]{Fre} Fredenhagen K. On the modular structure of local algebras of observables. Commun. Math. Phys. 97,79-89 (1985).

\bibitem[Haa]{Haa} Haag R. Local quantum physics. Springer-Verlag, Berlin (1992).

\bibitem[HO]{HO} Harada R, Ojima I. A unified scheme of measurement and amplification processes based on micro-macro duality – Stern-Gerlach experiment as a typical example – arXiv:0810.3400v1 (2008).

\bibitem[Ho]{Ho} Horv\'{a}ty PA. Non-commutative mechanics in mathematical and in condensed matter physics. arXiv:cond-mat/0609571v2. Symmetry, Integrability and Geometry: Methods and Applications Vol. 2 (2006).

\bibitem[LW]{LW} Lechner G, Waldmann S. Strict deformation quantization of locally convex algebras and modules. arXiv:1109.5950 (2011).

\bibitem[Mu]{Mu} Much A. Quantum mechanical effects from deformation theory. arXiv:1307.2609 (2013).

\bibitem[Oj1]{Oj1} Ojima I. Micro-macro duality in quantum physics arXiv:math-ph/0502038v1 (2005).

\bibitem[Oj2]{Oj2} Ojima I. Micro-macro duality and emergence of macroscopic levels. Talk at the International Symposium, QBIC (2007).

\bibitem[Oza]{Oza} Ozawa M.  Quantum measuring processes of continuous observables. J. Math. Phys. 25, 79 (1984).

\bibitem[Pod]{Pod} Podsedkowska H. Correlations in a general theory of quantum measurement. Open Sys. and Information Dyn.14:445–458 (2007).

\bibitem[Po]{Po} Pool JCT. Mathematical aspects of the Weyl correspondence. J. Math. Phys. 7, 66; doi: 10.1063/1.1704817 (1966).

\bibitem[Rie]{Rie} Rieffel MA. Deformation quantization for actions of $\R^d$. Mem. Amer. Math. Soc., 106(506), (1993).

\bibitem[Tan]{Tan} Tanimoto Y. Construction of wedge-local nets of observables through Longo-Witten endomorphisms. arXiv:1107.2629v2 (2012).

\bibitem[vN]{vN} Von Neumann J. Mathematical foundations of quantum mechanics. Princeton University Press, Princeton (1955).

\end{document}